\documentclass{article} % For LaTeX2e
\usepackage{iclr_aims,times, amsthm}

\usepackage{amsmath,amsfonts,bm}

\def\eqref#1{equation~\ref{#1}}
\def\1{\bm{1}}

\DeclareMathAlphabet{\mathsfit}{\encodingdefault}{\sfdefault}{m}{sl}
\SetMathAlphabet{\mathsfit}{bold}{\encodingdefault}{\sfdefault}{bx}{n}

\DeclareMathOperator*{\argmax}{arg\,max}

\usepackage{hyperref}
\usepackage{url}

\newtheorem{proposition}{Proposition}
\newtheorem{lemma}{Lemma}
\newtheorem{corollary}{Corollary}

\theoremstyle{definition}

\theoremstyle{remark}

\newcommand{\mexpect}[2]{\mathbb{E}_{#1}\left[ #2\right]}

\newcommand{\prob}[1]{\mathbb{P}\left( #1 \right)}

\newcommand{\oo}[1]{\left( #1 \right)}

\newcommand{\co}[1]{\left[ #1 \right)}
\newcommand{\cc}[1]{\left[ #1 \right]}

\title{Very Credible Auction}

\iclrfinalcopy

\author{Thanawat Sornwanee \\
Graduate School of Business\\
Stanford University\\
Stanford, CA 94305, USA \\
\texttt{tsornwanee@stanford.edu}}

\begin{document}

\maketitle

\begin{abstract}
    We introduce a notion and motivation for very credible auctions. This is a strengthening in the commitment in the credible auctions setting, making it more suitable for a modern digital applications.
\end{abstract}

These days, people, effectively but unknowingly, participate in auctions\footnote{Pricing is also included as an auction with a single bidder.}, performing the auction bidding through the interaction with digital platforms. Digital platforms collect data from users, and use it for allocating its (possibly scarce) resources. At the time of the interactions, it is difficult to know or believe how such interactions or reports, which can be done in an informal and unstructured manner like through natural language inputting, will be used, or even whether it will be used. The time such data is used can also be distant from the time it is collected. The location as well as the agency can also be different, The data may be collected by a company, get processed and sold to another company, and eventually get used by yet another company. The algorithm of the company may also be handled by another company via a complicated and obscure revenue sharing system.

One can imagine that, in such case, the bounded rationality would take effect, making users truthfully answer almost every question. However, as a system designer, one may holds a desideratum of some level of \emph{strategic fairness}, ensuring that a cognitively, morally, or computationally constrained individual will not be too much at a disadvantage. A strong notion for this desideratum can be a prevalent notion of strategy-proofness and its stronger variants, such as group strategy-proofness and obvious strategy-proofness.~\citep{li2017obviously} However, strategy-proofness is hard to obtain, we will then only look for truthfulness as one of the equilibria. 

The interaction system is also somewhat different from a canonical auction model or a canonical mechanism design, where the auction rule is specified and known to the bidder prior to the action of bidding. In this informal auction, the rule is unknown, and, even if known, cannot be committed or enforced. The algorithm for a company may be updated before the allocation is computed. Therefore, one can think of the action taken by a company is done in an ad-hoc manner. This suggests that we need a new framework to study this new auction system, which requires less credibility than what would have previously needed.

In the section~\ref{section:very}, we introduce and define \emph{Very Credible Auction} through an equilibrium of a novel auction game.

\section{Very Credible Auctions}
\label{section:very}

Unlike the notion of credible auction~\citep{akbarpour2020credible}, as well as other variants~\citep{haupt2021contextually, komo2024shill}, the notion of very credible auction will be defined through an equilibrium of a newly introduced auction model, which is a signaling game with a subsequent sale offer. We will later show that such equilibrium of this abstract game can be converted into a very credible auction. This is done by choosing an appropriate question answering scheme so that truthful answering is part of the equilibrium strategy without changing the economic behavior of the equilibrium found via the abstract game.

We first informally describe the auction/interaction scheme. There exists a single product, which is valued by many bidders (online users/customers). Each bidder gets to privately communicate with the auctioneer (digital platform and pricing/allocating platform). This communication can be informal as well as uninformative, or it can also be formal via some feature experiment or product and preference survey. The auctioneer is also allowed to response back to the bidders, but also in private. For example, there can be a follow-up question or survey. The auctioneer may also go idle, and come back to interact with the bidders later. We allow this possibility as long as the sequential nature of the interaction is finalized before the allocation/pricing decision done by the auctioneer. Finally, the auctioneer chooses a single bidder to sell the product to. The price is chosen by the auctioneer, but the bidder can still reject the offer.

\subsection{Model}

There exist $n$ bidders and a single product in the system. 

Each bidder realizes the value of the product to be $V_i \sim \text{Unif}([0,1])$. The product is currently endowed to the auctioneer, whose intrinsic value for the product is $0$.

At time $t=1$, each bidder, in private, submits a bid, which can be a private communication\footnote{In a conventional auction, the bid is usually a price a bidder will buy from the auctioneer if win (as in first price auction) or other quantities related to price. However, in this setting, we allow the flexibility so it can capture other mode of communication including informationally irrelevant communication.}, to the auctioneer. The bid is a message $m \in \mathcal{M}$, which is the abstract collection of messages assumed to be sufficiently rich and regular, and we will from now on use the term \emph{message} exchangeably with \emph{bid}. We allow independent randomization, and denote the possibly random message sent by the bidder $i \in \{1,2,\dots,n\}$ as $M_i$.

At time $t=2$, upon receiving the total message
\begin{align*}
    M := \cc{M_i}_{i=1}^n,
\end{align*}
the auctioneer selects the bidder to whom such auctioneer will send a sale proposal as well as the ask price. Thus, the auctioneer chooses a possibly random $(I, P)$, where $I$ is the possibly random index of the winner from $\{1,2,\dots, N\}$, while $P$ is the possibly random price taking a value of $[0,1]$. 

At time $t=3$, the proposed bidder $I$ decides whether to accept the sale proposal and buy the product with price $P$ or to reject the proposal and has no transaction. The sale proposal is a take-it-or-leave-it offer\footnote{which is common in mechanism design literature}, and we assume that, once a rejection happens, the auctioneer cannot propose another sale proposal to the same winner nor to other bidders\footnote{This restriction prevents the sequential bargaining and price experimentation, but one can also consider these more dynamic framework as a possibly more realistic assumption for some auctions.}. We denote the possibly random indicator of whether the proposal is accepted as $B$.

The utility of the bidder $i \in \{1,2,\dots,n\}$ is
\begin{align*}
    U_i = 
    \begin{cases}
        B(V_i -P) &\text{ if } i = I\\
        0 &\text{ if } i \ne I
    \end{cases},
\end{align*}
meaning that the utility is zero if the transaction does not happen and is equal to the difference between the value and the price if the transaction is realized.

The utility for the auctioneer is the revenue
\begin{align*}
    R_i = BP,
\end{align*}
meaning that the utility is the price if the transaction occurs, and is otherwise $0$.

We are interested in a symmetric equilibrium consisting of perfect Bayesian equilibrium for the signaling part (at time $t=1$ and $t=2$) as well as a subgame perfect equilibrium for the response part (at time $t=3$).

\paragraph{Rationality to Accept/Reject}

At time $t=3$, it is obvious that, if $V_I > P$, the the bidder $I$ will accept the proposal, making $B=1$ with probability $1$. Similarly, if $V_I< P$, rationality also enforces that $B=0$ with probability $1$. In general, the behavior is not pinned down when $V_I$ is exactly $P$. However, we assume that 
\begin{align*}
    B = \mathbf{1}_{V_I \ge P},
\end{align*}
so tie-breaking of whether to accept the offer will be in favor of the auctioneer\footnote{This is rather standard, and also provides closeness, which can be important to achieve the existence of optimal price.~\citep{sornwanee20251}}. This uniqueness in the response will also help simplifying the equilibrium notion.

\subsection{Truthfulness}
Currently, the notion of truthfulness is meaningless, since there is no semantic constraint in the sending of the message $M_i$ for a bidder $i$. To make truthfulness matters, we introduce a simple question-answering scheme for the bidding/communication scheme. It is easy to show that an equilibrium can be converted into a truthful equilibrium by a direct revelation principle like argument.

\subsection{Definition}
A mechanism is specified by $\oo{\hat{M}, \kappa}$. The kernel $\hat{M}$ specifies the selection of message $M_i$ conditioned on the realized value $V_i$, and such kernel is shared across every bidder $i \in \{1,2,\dots, n\}$. The kernel $\kappa$ is a kernel mapping the realized full message $M \in \mathcal{M}^n$ to a sale proposal $(I,P)$.

As we have established that truthfulness can be done by choosing a question-answering complying with any arbitrary semantic restriction, we say that $\oo{\hat{M}, \kappa}$ is a very credible mechanism if $\oo{\hat{M}, \kappa}$ is a symmetric perfect Bayesian equilibrium for the abstract game specified previously.

\subsection{Connection to Credible Auction}

    Credible auction~\cite{akbarpour2020credible} is a deviation from a more conventional auction in the following manner: In a conventional auction, the auctioneer creates a rule of the game, specifying what the bidders can bid, and the subsequent event of the allocation and transaction conditioned on the realized joint bids. Such subsequent event often governs whether the auctioneer gets to keep the item, which bidders get the item (and in some case how much), and the price each bidder has to pay. This relies on a trustworthy third party to uphold the rule of the same. Credible auction considers the scenario when the bidding is private, making the realized bid known to only the corresponding bidder and the auctioneer. Therefore, the bidder cannot always ensure that the auctioneer commits to the rule of the game. For the mechanism that the bidders can distributedly ensure, we consider it a credible auction. 

    However, we can also view it as \emph{credibility} through the \emph{commitment} of the bidder to the rule of the game. For example, in first price auction without a reserve price as well as generalized first price auction~\citep{mehta2007adwords,ostrovsky2022pure}\footnote{which is when the auctioneer supposed to treat the report value as if it was a truth or as if it was from a VCG mechanism~\citep{groves1973incentives}.}, as long as the bidder can commit to not paying more than what they have bid, the auctioneer cannot profitably deviate due to monotonicity.

    We are seeking to relax such commitment. For example, in a first price auction setting, even when the winner has bid a price of $m$ but the actual valuation is $v>m$, the winner will still accept the offer made by the auctioneer at the price of $p\in (m,v)$. This is because the announcement to only accept the price less than or equal to $m$ is not credible in the first place. The auctioneer may come to offer the sale proposal via alias or through delegate, or the time of sale proposal may be already far fro the tie the bidding activity is finalized. Thus, the first price auction with symmetric and screening equilibrium, though credible, is not very credible.

\section{Single Agent Case}

    We have suggested that a first price auction does not satisfy the criteria for being a very credible auction in general. However, this is only true in a multi-bidder case where a symmetric equilibrium is to bid $M_i = \frac{n-1}{n} V_i$. Therefore, whenever a bidder $i$ wins the auction, the auctioneer under equilibrium can inversely calculate $V_i = \frac{n}{n-1} M_i$, and can select a price $P = \frac{n}{n-1} M_i$ for the bidder to face. Therefore, even if a bidder wins the auction, the surplus will still be $0$. 
    
    However, in a single agent case, first price auction with a reserve price is a very credible mechanism. This is because the bidding will be informationally irrelevant, allowing the auctioneer to be committed to the reserve price, which is also the monopoly price of $\frac{1}{2}$.

    Moreover, we can see that this scheme is the only very credible auction in this scenario.

    \begin{proposition}
        If the number of agents $n=1$, and the message space $\mathcal{M}$ is countable, then the sale proposal $(I,P) = \oo{1,\frac{1}{2}}$ almost surely under equilibrium.
    \end{proposition}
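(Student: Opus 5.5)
The plan is to squeeze the auctioneer's equilibrium revenue between two bounds that both equal $\frac14$, and then use the uniqueness of the revenue-optimal single-buyer mechanism to pin down the price.

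\emph{Setup.} Since $n=1$, we always have $I=1$, so only $P$ needs to be determined. Because $\mathcal{M}$ is countable, the messages sent with positive probability, $\mathcal{M}^+:=\{m:\prob{M_1=m}>0\}$, carry total probability one. For each $m\in\mathcal{M}^+$ the posterior of $V_1$ given $M_1=m$ is given by the elementary Bayes formula. This is exactly where countability enters: it avoids any measure-theoretic ambiguity in the auctioneer's beliefs. Perfect Bayesian equilibrium then says that for every $m\in\mathcal{M}^+$, $\kappa(m)$ is supported on maximizers of $p\mapsto p\,\prob{V_1\ge p \mid M_1=m}$.

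\emph{Lower bound $\ge \frac14$.} Fix any $q\in[0,1]$. Message by message, the equilibrium revenue at $m$ is at least $q\,\prob{V_1\ge q\mid M_1=m}$. Averaging over $m\in\mathcal{M}^+$ with weights $\prob{M_1=m}$ gives
\begin{align*}
\mexpect{}{BP}\ \ge\ q\,\prob{V_1\ge q}=q(1-q).
\end{align*}
Taking $q=\frac12$ yields $\mexpect{}{BP}\ge\frac14$.

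\emph{Upper bound $\le\frac14$.} Each message $m$ induces a lottery $\kappa(m)$ over prices, and the buyer's payoff from it is $u_m(v)=\mexpect{P\sim\kappa(m)}{(v-P)^+}$. In equilibrium, the bidder's strategy $\hat M$ picks (almost surely) a message maximizing $u_m(V_1)$ over all of $\mathcal{M}$. So the map $v\mapsto$ (allocation probability $\prob{V_1\ge P\mid m}$, expected payment) defines a direct mechanism that is incentive compatible, because the menu $\{\kappa(m)\}$ is the same for every type and each type chooses optimally from it. It is also individually rational, because $u_m\ge0$. Myerson's bound for one buyer with $\text{Unif}([0,1])$ values therefore gives revenue at most $\frac14$.

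\emph{Identifying the price.} Combining the two bounds, $\mexpect{}{BP}=\frac14$, so the induced direct mechanism is revenue optimal. By uniqueness of the optimal mechanism in this regular case (revenue equals expected virtual surplus $\mexpect{}{(2V_1-1)x(V_1)}$, which is uniquely maximized up to null sets by $x=\mathbf 1_{V_1\ge 1/2}$), the allocation must equal $\mathbf 1_{V_1\ge\frac12}$ almost surely. Payment equivalence, with zero payment for type $0$, then forces the realized payment to be $\frac12$ whenever trade occurs. This gives $P=\frac12$ almost surely on the event $\{B=1\}$.

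\emph{Main obstacle.} The remaining difficulty is the event of no trade. Take a message $m$ sent only by types $V_1<\frac12$. Its posterior is concentrated on $[0,\frac12)$, and the revenue-maximizing price at $m$ must be strictly below $\frac12$ unless that posterior puts no mass on $(0,\frac12)$. Yet the allocation must be $0$ for these types. The tension is resolved as follows. Because $u_{m'}(v)=\mexpect{P\sim\kappa(m')}{(v-P)^+}$ is increasing, any message $m'$ whose price lottery puts positive mass strictly below $\frac12$ would attract a positive measure of types above $\frac12$. Those types would then pay less than $\frac12$ with positive probability, contradicting the payment identification. Hence every on-path price lottery is supported in $[\frac12,1]$.

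A symmetric argument handles prices above $\frac12$. Every message $m\in\mathcal{M}^+$ has optimal revenue exactly $\frac12\prob{V_1\ge\frac12\mid m}$, since the equality case of the lower bound is tight message by message. Any price $p>\frac12$ charged at $m$ would leave some types in $[\frac12,p)$ without trade, contradicting $x=\mathbf 1_{V_1\ge 1/2}$.

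The one residual degenerate case is a message whose posterior has no mass above $\frac12$ and which makes every price in $[\frac12,1]$ an (empty) best response. I expect to need the tie-breaking convention, or to read ``almost surely'' as holding modulo such null-revenue indifferences. This is the step I would check most carefully.
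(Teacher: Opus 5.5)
Your revenue sandwich is a genuinely different route from the paper's, and it is sound up to identifying the allocation:
\begin{itemize}
\item best-responding after every message earns at least what posting $q$ would, so revenue is at least $q(1-q)$;
\item the equilibrium menu $\{\mu_m\}_{m\in\mathcal M}$ is incentive compatible and individually rational, so Myerson caps revenue at $\frac14$;
\item uniqueness of the virtual-surplus maximizer, with $U(0)=0$, forces $x=\mathbf{1}_{V_1\ge 1/2}$ almost everywhere.
\end{itemize}
Two later steps need repair.

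\textbf{Realized payment.} Payment equivalence yields only \emph{expected} payment $\frac12$ for types above $\frac12$, not realized payment $\frac12$. For example, a type $0.9$ facing $P\in\{0.3,0.7\}$ with equal odds has $x=1$ and $t=\frac12$. Your support argument then cites this unproved realized-payment claim. The fix is to argue from the buyer's optimality rather than from payments:
\begin{itemize}
\item By the envelope formula, $U(v):=\sup_{m\in\mathcal M}u_m(v)=(v-\frac12)^+$ for every $v$.
\item Hence $u_m(\frac12)=\mexpect{P\sim\mu_m}{(\frac12-P)^+}\le 0$, so every lottery in the menu, on or off path, is supported in $[\frac12,1]$.
\item For almost every type $v>\frac12$ and any $m$ it sends, $\mu_m([0,v])=1$ and $u_m(v)=v-\frac12$. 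So $\mu_m$ has mean $\frac12$ on $[\frac12,v]$, hence $\mu_m=\delta_{1/2}$.
\end{itemize}
This also replaces your sentence on prices above $\frac12$, which only bites for messages actually sent by types in $[\frac12,p)$.

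\textbf{The residual case.} The ``residual degenerate case'' is not a tie-breaking issue, and no weakened reading of ``almost surely'' is needed, because it cannot occur. If $\prob{M_1=m}>0$, the posterior of $V_1$ given $M_1=m$ has density at most $1/\prob{M_1=m}$. So it has no atom at $0$, and $\max_p p\,\prob{V_1\ge p\mid M_1=m}>0$; zero-revenue prices are never best responses on path. Your own tightness identity $R_m=\frac12\prob{V_1\ge\frac12\mid M_1=m}$ then forces $\prob{V_1\ge\frac12\mid M_1=m}>0$. You in fact had both halves: you noted that a message sent only by types below $\frac12$ has all optimal prices in $(0,\frac12)$, which contradicts support in $[\frac12,1]$, but you did not combine them. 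So every on-path message is sent by a positive mass of types above $\frac12$. The previous paragraph then gives $\mu_m=\delta_{1/2}$ for each such $m$, and $P=\frac12$ almost surely.

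\textbf{Comparison with the paper.} The paper never computes revenue. Lemmas~\ref{lemma:block} and~\ref{lemma:samefucntion} show that all on-path messages induce the same convex function $u^{(m)}$: once one dips below another it attracts no types, so it cannot bend back up. Corollary~\ref{corollary:convexcharcaterization} then gives a common price lottery. A lottery that is a best response to every posterior is a best response to the prior, hence $\delta_{1/2}$. Your route trades that no-crossing argument for Myerson's bound plus uniqueness of the optimal mechanism. It is shorter, rests on standard results, and delivers revenue $\frac14$ and the allocation along the way. The paper's route instead obtains directly the structural fact that the on-path menu collapses to a single lottery, without optimal-mechanism theory.
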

    This proposition suggests that the sale offer will not be altered by the bid $M$ received.
    If the pricing is deterministic to the message, then it is obvious that heterogeneity cannot be supported by an equilibrium. Otherwise, the bidder will always submit the bid that give the minimal price. Although this may suggest that randomization of price can sustain an equilibrium, we will see that this is not the case due to the fact that the auctioneer has to set a randomization over a set of monopoly prices with respect to the inferred posterior of $V_1$ conditioned on $M_1$.

    Under an equilibrium, for each message $m \in \mathcal{M}$, the platform/monopoly will uncommittedly set a randomized price
    \begin{align*}
        (P\vert M_1=m) \sim \mu_m \in \Delta([0,1]).
    \end{align*}

    The bidder then find a conditional utility for sending a message $m \in \mathcal{M}$ and having a type $v \in [0,1]$ as well as conditioned on the subsequent action of being sequential rational after the price is proposed to be
    \begin{align*}
        u^{(m)}(v) := \mexpect{P \sim \mu_m}{\max\oo{\left\{v-P,0\right\}}}.
    \end{align*}

    It is easy to see the following characterization.
    \begin{corollary}
    \label{corollary:convexcharcaterization}
        For any convex, continuous, non-decreasing function $u: [0,1] \to [0, \infty)$ such that $u(0) = 0$ and $\sup_{x \in [0, 1)} \sup(\partial u(x)) \le 1$, there exists a unique distribution $\mu \in \Delta([0,1])$ such that
        \begin{align*}
            u(x) = \mexpect{P \sim \mu}{\max\oo{\left\{x-P,0\right\}}}
        \end{align*}
        for all $x \in [0,1]$.

        For any distribution $\mu \in \Delta([0,1])$, the function $: [0,1] \to [0, \infty)$ defined such that
        \begin{align*}
            u(x) := \mexpect{P \sim \mu}{\max\oo{\left\{x-P,0\right\}}}
        \end{align*}
        for all $x \in [0,1]$ will be convex, continuous, and non-decreasing, have $u(0)=0$, and have $\sup_{x \in [0, 1)} \sup(\partial u(x)) \le 1$.
    \end{corollary}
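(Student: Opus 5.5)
The plan is to identify the measure $\mu$ with the right derivative of $u$ through the distribution function $F_\mu(p) = \mu([0,p])$, so that existence, uniqueness and the converse direction all follow from the standard correspondence between convex functions and their one-sided derivatives. The basic computation, which I will use in both directions, is Fubini applied to $\max\{x-P,0\} = \int_0^x \mathbf{1}_{P \le t}\, dt$ for $x \in [0,1]$. This gives
\begin{align*}
    \mexpect{P \sim \mu}{\max\oo{\left\{x-P,0\right\}}} = \int_0^x F_\mu(t)\, dt .
\end{align*}

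\textbf{Converse direction.} I would do this first, since it is routine and gives the form that the first part must produce. The integrand $F_\mu$ is non-decreasing, right-continuous and takes values in $[0,1]$. Hence $u(x) = \int_0^x F_\mu$ has the following properties:
\begin{itemize}
    \item it is continuous (indeed $1$-Lipschitz);
    \item it is non-decreasing and nonnegative, with $u(0)=0$;
    \item it is convex, being the integral of a non-decreasing function.
\end{itemize}
Its right derivative at $x$ is $F_\mu(x)$ and its left derivative is $F_\mu(x^-)$. So $\partial u(x) = [F_\mu(x^-), F_\mu(x)]$ on $(0,1)$, and at $x=0$ its supremum is $F_\mu(0)$. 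In every case $\sup \partial u(x) \le 1$ for $x \in [0,1)$.

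\textbf{Existence.} Given $u$ satisfying the hypotheses, let $D^+u(x)$ denote its right derivative at $x \in [0,1)$. By convexity, $D^+u$ is non-decreasing and right-continuous. It is nonnegative because $u$ is non-decreasing. It is bounded by $1$ because $D^+u(x) = \sup \partial u(x)$ on $(0,1)$, and at $x=0$ we have $D^+u(0) \le D^+u(y) \le 1$ for $y > 0$. Set $F(x) := D^+u(x)$ for $x \in [0,1)$ and $F(1) := 1$. This $F$ is non-decreasing, right-continuous and $[0,1]$-valued, so it is the distribution function of a unique $\mu \in \Delta([0,1])$. Any mass missing below $1$ sits as an atom at $P=1$, and an atom at $1$ does not affect $\max\{x-P,0\}$ for $x \le 1$. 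A convex function that is continuous on $[0,1]$ is absolutely continuous on $[0,1]$, with $u(x) = u(0) + \int_0^x D^+u$. The continuity at the endpoints, assumed in the hypothesis, is what makes this hold up to $0$ and $1$. Combining this with $u(0)=0$ and the Fubini identity yields the representation.

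\textbf{Uniqueness.} This is the step needing care. If $\mu$ and $\nu$ both represent $u$, then $\int_0^x F_\mu = \int_0^x F_\nu$ for all $x \in [0,1]$. Differentiating on the right, and using right-continuity, gives $F_\mu = F_\nu$ on $[0,1)$. Both equal $1$ at $1$, so $\mu = \nu$. The main obstacle is the boundary behaviour. An atom of $\mu$ at $1$ is detected only because $\mu$ is required to be a probability measure, that is, because we set $F(1) = 1$. Likewise, the behaviour at $0$ requires that the right derivative there is finite, which comes from the bound by $1$. Beyond these boundary checks, the argument is the standard one-to-one correspondence between distribution functions and convex primitives.
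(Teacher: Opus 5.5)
Your proof is correct and complete. The paper gives no proof of this corollary; it only says ``It is easy to see the following characterization.'' So there is no argument in the paper to compare yours against.

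Your route is the standard one, and presumably what the author intended. The identity $\mathbb{E}_{P\sim\mu}[\max\{x-P,0\}] = \int_0^x F_\mu(t)\,dt$ reduces both directions to the correspondence between distribution functions on $[0,1]$ and right derivatives of convex primitives with slopes in $[0,1]$.

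Your boundary bookkeeping is right:
\begin{itemize}
\item Continuity at the endpoints rules out jumps of $u$, and it also gives right-continuity of $D^+u$ at $0$.
\item The slope bound gives $F\le 1$ on $[0,1)$.
\item Setting $F(1)=1$ puts the leftover mass $1-D^-u(1)$ as an atom at $1$. That atom is invisible to $u$ on $[0,1]$, but it is forced by $\mu$ being a probability measure, which is exactly why uniqueness holds.
\end{itemize}

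One simplification is available. Since $0\le D^+u\le 1$ on $[0,1)$, $u$ is $1$-Lipschitz on $[0,1)$, and hence on $[0,1]$ by continuity. So you can get $u(x)=\int_0^x D^+u$ without the general fact that continuous convex functions on a closed interval are absolutely continuous.
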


    For notational simplicity, we extend the domain so that $u^{(m)}: [0,1.1] \to [-0.1,1.1]$, so we will have that the characterization is changed from $\sup_{x \in [0, 1)} \sup(\partial u(x)) \le 1$.

    \begin{lemma}[Conditional Monopoly Pricing]
    \label{lemma:block}
        
        For any $m \in \mathcal{M}$ with $\prob{M_1=m} > 0$, for any $x,y \in [0,1]$ with $y\ge x$, we have that
        \begin{align*}
            \sup \partial u^{(m)}(y) - \inf \partial u^{(m)}(x) > 0
        \end{align*}
        only if, for any $\epsilon > 0$, the measure $\Pi\oo{\co{x,y+\epsilon} \cap \left\{v \in [0,1]: m \in\argmax_{m' \in \mathcal{M}}u^{(m')}(v)\right\}} > 0$.
    \end{lemma}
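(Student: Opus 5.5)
The plan is to prove the contrapositive. Fix $m$ with $\prob{M_1=m}>0$ and $x\le y$. Write $S_m := \left\{v \in [0,1]: m \in \argmax_{m'} u^{(m')}(v)\right\}$, and let $\Pi$ denote the law of $V_1$ (uniform). Suppose that for some $\epsilon>0$ we have $\Pi\oo{\co{x,y+\epsilon}\cap S_m}=0$. We want to show that $\sup \partial u^{(m)}(y) \le \inf \partial u^{(m)}(x)$. Since $u^{(m)}$ is convex, this is equivalent to $u^{(m)}$ being affine on a neighbourhood of $[x,y]$, which in turn is equivalent to $\mu_m$ putting no mass on $\co{x,y}$.

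The key identity comes from the representation in Corollary~\ref{corollary:convexcharcaterization}. For $z$ in the interior, the right derivative of $u^{(m)}$ at $z$ is $\mu_m([0,z])$, and the left derivative is $\mu_m([0,z))$. Hence $\sup\partial u^{(m)}(y)-\inf\partial u^{(m)}(x)=\mu_m(\cc{x,y})$, up to the convention at the endpoints. The extended domain $[0,1.1]$ is used exactly so that the endpoints $0$ and $1$ behave like interior points. So it suffices to show that $\mu_m(\cc{x,y})=0$ whenever the posterior of $V_1$ given $M_1=m$ puts no mass on $\co{x,y+\epsilon}$.

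Next, use equilibrium. Because the bidder best-responds, $\hat M(\cdot\mid v)$ is supported on $\argmax_{m'}u^{(m')}(v)$ for a.e.\ $v$. Since $\mathcal M$ is countable, the posterior $\nu_m$ of $V_1$ given $M_1=m$ is well defined by Bayes' rule, and it is absolutely continuous with respect to $\Pi$ restricted to $S_m$. In particular $\nu_m(\co{x,y+\epsilon})=0$.

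Sequential rationality of the auctioneer then requires that $\mu_m$ be supported on the set of maximisers of the revenue $R(p)=p\,\nu_m([p,1])$. Take any $p\in\cc{x,y}$ with $p>0$. Raising the price to $p'=y+\epsilon/2$ loses no buyers, since $\nu_m([p,p'))=0$, and strictly increases revenue whenever $\nu_m([p',1])>0$. If instead $\nu_m([p',1])=0$, then $\nu_m$ lives on $[0,x)$, so $R(p)=0$. Meanwhile some price below $x$ gives strictly positive revenue, because $\nu_m$ is not a point mass at $0$: it is absolutely continuous and has total mass $1$. For $p=0$ a positive price also does strictly better. In every case no point of $\cc{x,y}$ is a maximiser, so $\mu_m(\cc{x,y})=0$, and the contrapositive is established.

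The main obstacle is the bookkeeping at the boundary and the edge cases. Matching $\sup\partial u(y)-\inf\partial u(x)$ with $\mu_m(\cc{x,y})$ versus $\mu_m(\co{x,y})$ needs care, and this is precisely where the slack $\epsilon$ in $\co{x,y+\epsilon}$ is needed: it lets the auctioneer strictly raise the price from $y$. One also needs the existence of a positive-revenue price when $\nu_m$ is concentrated below $x$. I would handle these by working with one-sided derivatives explicitly, and by using absolute continuity of $\nu_m$, inherited from the uniform prior and the countability of $\mathcal M$, to rule out atoms.
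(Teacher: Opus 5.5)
Your proof is correct and follows the paper's route. You identify $\sup\partial u^{(m)}(y)-\inf\partial u^{(m)}(x)$ with $\mu_m([x,y])$ (the paper phrases this as a monopoly price lying in $[x,y]$), bound the posterior of $V_1$ given $M_1=m$ by $\Pi$ restricted to $S_m$ times $1/\prob{M_1=m}$, and observe that raising any price in $[x,y]$ to just above $y$ loses no buyers while the optimal revenue is positive.

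You are more careful than the paper, which conflates the price law $\mu_m$ with that posterior and leaves the zero-revenue case implicit. Two small slips remain. On $[0,1.1]$ the point $0$ is still a boundary point, so at $x=0$ your identity needs the convention $\inf\partial u^{(m)}(0)=0$, obtained by extending $u^{(m)}$ by $0$ to the left. Also, your opening equivalences (``affine on a neighbourhood'', ``$\mu_m([x,y))=0$'') should read $\mu_m([x,y])=0$, which is what you actually prove.
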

    \begin{proof}
        We note that $\sup \partial u^{(m)}(y) - \inf \partial u^{(m)}(x) > 0$ if and only if there exists some monopoly price $p^{(m)}$ with respect to the posterior $\mu_m$ such that $p^{(m)} \in [x,y]$. Note that $\mu_m$ is upperbounded (for every measurable set) by $\frac{1}{\prob{M=m}} \Pi_m$ where $\Pi_m$ is the restriction of the probability $\Pi$ to the region of $v \in [0,1]$ where $m \in\argmax_{m' \in \mathcal{M}}u^{(m')}(v)$. Thus, if there exists some $\epsilon > 0$ such that
        \begin{align*}
            \Pi\oo{\co{x,y+\epsilon} \cap \left\{v \in [0,1]: m \in\argmax_{m' \in \mathcal{M}}u^{(m')}(v)\right\}} = 0,
        \end{align*}
        we can then raise the price from $p^{(m)}$ to $p^{(m)}+\frac{\epsilon}{2}$ and get a higher expected revenue, since the revenue will be $\frac{p^{(m)}+\frac{\epsilon}{2}}{p^{(m)}} > 1$ of the original revenue, which is non-zero since $\mu_m$ is not degenerate at $0$.
    \end{proof}

    From this lemma, we will see that, even when two distinct convex functions satisfying the conditions in the corollary~\ref{corollary:convexcharcaterization} can intersect many times, the monopoly with the lack of commitment power cannot commit to such convex functions. After the first convex function takes value below\footnote{without loss of generality} the second, the first convex function will not be able to attract such values of user to choose it. Thus, the function cannot change its slope and will forever be lower than the second function.

    \begin{lemma}
    \label{lemma:samefucntion}
        For any $m, m' \in \mathcal{M}$ with $\prob{M_1=m},\prob{M_1=m'} > 0$, then
        \begin{align*}
            u^{(m)}(v) = u^{(m')}(v)
        \end{align*}
        for all $v \in [0,1]$.
    \end{lemma}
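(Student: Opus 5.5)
We argue by contradiction, following the heuristic sketched before the statement. Suppose $m,m'$ both have positive probability and $u^{(m)}\neq u^{(m')}$ somewhere on $[0,1]$. Both functions vanish at $0$ and are continuous, convex and non-decreasing by Corollary~\ref{corollary:convexcharcaterization}. So the set where they agree contains $0$ and is closed. Let
\begin{align*}
x_0 := \sup\left\{x \in [0,1] : u^{(m)} = u^{(m')} \text{ on } [0,x]\right\},
\end{align*}
so that the two functions agree on $[0,x_0]$ and differ at points arbitrarily close to the right of $x_0$. Since both functions are convex, their right derivatives at $x_0$ exist. The first step is to show that, without loss of generality, $u^{(m)} < u^{(m')}$ on some interval $(x_0, x_0+\delta)$. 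If the right derivatives at $x_0$ differ, this is immediate. If they coincide, I would compare the right derivatives (which are monotone and right-continuous) just after $x_0$ to pick the function that is eventually lower. This local sign determination is delicate, since two convex functions may oscillate around each other. As a fallback, I would replace $x_0$ by a point where $u^{(m)}<u^{(m')}$ holds and work on a maximal interval of strict inequality.

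The second step exploits that $u^{(m)}$ is strictly dominated there. For every $v$ with $u^{(m)}(v)<u^{(m')}(v)$ we have $m\notin\argmax_{m''}u^{(m'')}(v)$, so the set $\{v: m\in\argmax\}$ misses the region where $u^{(m)}$ lies below $u^{(m')}$. Apply Lemma~\ref{lemma:block} contrapositively on subintervals $[x,y]$ of this region, together with a small $\epsilon$ that stays inside the region. The lemma then gives $\sup\partial u^{(m)}(y)=\inf\partial u^{(m)}(x)$, so $u^{(m)}$ is affine on the region: its slope cannot increase while it is dominated.

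The third step propagates this. Let $s$ be the slope of $u^{(m)}$ on the region, and let $x_1>x_0$ be the first point after the region where the two functions meet again. Since they agree at $x_0$ and $u^{(m)}$ is below in between, $u^{(m')}$ must be steeper than $s$ somewhere in $(x_0,x_1)$. By convexity, $u^{(m')}$ then has slope strictly greater than $s$ from that point on, while $u^{(m)}$ still has slope $s$ up to $x_1$. So $u^{(m')}-u^{(m)}$ is nondecreasing and strictly positive just before $x_1$, and it cannot return to $0$ at $x_1$. Hence no such $x_1$ exists, and the region extends to $1$.

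The fourth step concludes. It follows that $m$ is optimal only for types in a $\Pi$-null set inside $(x_0,1]$. On $[0,x_0]$ the two functions agree, so $u^{(m)}$ equals its value at $0$, namely $u^{(m)} \equiv 0$ near $0$ (again from Lemma~\ref{lemma:block}), and $\mu_m$ is supported on $[x_0,1]$. Then the posterior given $M_1=m$ puts mass only on $[0,x_0]$, while every price in $\mathrm{supp}\,\mu_m$ is at least $x_0$. So $\mu_m$ sells with probability zero, unless it prices at $x_0$ exactly. Since $\prob{M_1=m}>0$, the auctioneer can instead post the monopoly price for that posterior and obtain positive revenue, contradicting sequential rationality. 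Countability of $\mathcal{M}$ guarantees the positive-probability messages carry all the mass, so the posteriors are well defined. I expect the main obstacle to be the first step, pinning down the local ordering right after $x_0$, along with the boundary/tie cases in the last step.
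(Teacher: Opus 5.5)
Your fourth step has a genuine gap. From ``on $[0,x_0]$ the two functions agree'' you infer that $u^{(m)}\equiv 0$ there and that $\mu_m$ is supported on $[x_0,1]$, but neither follows. Agreement of $u^{(m)}$ with $u^{(m')}$ says nothing about either function being constant. Lemma~\ref{lemma:block} also does not stop $\mu_m$ from charging $[0,x_0]$, because types below $x_0$ may well choose $m$, which ties with $m'$ there. Under your fallback, where the region is a maximal interval $(a,b)$ of strict inequality, you do not even have agreement on $[0,a]$. What your second and third steps actually give is the opposite of what you use: $\mu_m$ puts no mass on $(a,1]$. So the posterior given $M_1=m$ and the prices in $\mathrm{supp}\,\mu_m$ both live in $[0,a]$. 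That configuration is compatible with sequential rationality (a uniform posterior on $[0,a]$ with price $a/2$ is fine), so the revenue-deviation contradiction you aim for never materializes.

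The missing idea is simply that $\mu_m$ is a probability measure. First get $\mu_m((a,1])=0$ by applying Lemma~\ref{lemma:block} with $y=1$. For $x>a$, the set of types for which $m$ is optimal misses $(a,1]$, and $\Pi$ does not charge $(1,1+\epsilon)$, so this also rules out an atom at $1$. Then the slope of $u^{(m)}$ on $(a,1]$ is $s=\mu_m([0,a])=1$. Your third step already showed that $u^{(m')}$ is strictly steeper than $s$ somewhere in the region, while Corollary~\ref{corollary:convexcharcaterization} caps all slopes at $1$, which is a contradiction. Equivalently, $u^{(m')}(v)\le u^{(m')}(a)+(v-a)=u^{(m)}(v)$ for $v\in[a,1]$, contradicting strict inequality there. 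This is essentially how the paper closes: the slope of $u^{(m)}$ stays stuck below $1$ to the right end, and the paper extends the domain to $[0,1.1]$ to handle the boundary.

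Otherwise your plan matches the paper's. You should drop the $x_0$ construction and go straight to the fallback, which is exactly the paper's choice of $(a,b)$ with $u^{(m)}(a)=u^{(m')}(a)$. Your third step is in fact more careful than the paper's. The paper asserts $\sup\partial u^{(m)}(a)<\sup\partial u^{(m')}(a)$ at the meeting point without justification, whereas you only need strict steepness somewhere inside the region.
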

    \begin{proof}
        If not, then, without loss of generality, there exists some $v \in [0,1]$ such that $u^{(m)}(v) < u^{(m')}(v)$. From the corollary~\ref{corollary:convexcharcaterization}, the discrepancy function $u^{(m)} - u^{(m')}$ is continuous. Thus, there exists a non-empty open interval $(a,b) \subseteq (0,1)$ such that $u^{(m)}(v) < u^{(m')}(v)$ for all $v \in (a,b)$, and $u^{(m)}(a) = u^{(m')}(a)$.
        
        Thus, 
        \begin{align*}
            \sup \partial u^{(m)}(a) < \sup \partial u^{(m')}(a),
        \end{align*}
        and the set
        \begin{align*}
            A := \left\{v \in [0,1]: m \in\argmax_{m' \in \mathcal{M}}u^{(m')}(v)\right\} \subseteq [0,1] - (a,b).
        \end{align*}

        From the lemma~\ref{lemma:block}, for any $v,v' \in (a,b)$, we will have that
        \begin{align*}
            \sup \partial u^{(m)}(v) = \inf \partial u^{(m)}(v'),
        \end{align*}
        so
        \begin{align*}
            \inf \partial u^{(m)}(b) = \sup_{v \in (a,b)} \sup \partial u^{(m)}(v) =
            \inf_{v' \in (a,b)} \inf \partial u^{(m)}(v') = \sup \partial u^{(m)}(a),
        \end{align*}
        while $\inf \partial u^{(m')}(v) \ge \sup \partial u^{(m')}(a)> \sup \partial u^{(m)}(a)$ for all $v > a$. Thus, we have that
        \begin{align*}
            u^{(m)}(b) - u^{(m')}(b) < 0.
        \end{align*}
        Therefore, there cannot be any $b \in (a, 1.1]$ such that $u^{(m)}(b) > u^{(m')}(b)$, making $\sup \partial u^{(m)}(1) < \sup \partial u^{(m)}(a) \le 1$, creating a contradiction with the corollary~\ref{corollary:convexcharcaterization}.    
    \end{proof}

    We can now proceed to prove the main proposition.
    \begin{proof}
        The first part of the main theorem follows directly from the lemma~\ref{lemma:samefucntion} and the corollary~\ref{corollary:convexcharcaterization}.

        The second part is obvious, since we specify the off-equilibrium path to be non-profitable.
    \end{proof}

\bibliography{egbib_full}
\bibliographystyle{iclr2026_conference}

\end{document}